\newtheorem{prop}{Proposition}
\newcommand{\dop}{_{{\rm D},l}}
\newcommand{\p}{\Omega} 
\newcommand{\Z}{{\bf Z}}
\newcommand\smallo{
  \mathchoice{{\scriptstyle\mathcal{O}}}% \displaystyle
    {{\scriptstyle\mathcal{O}}}% \textstyle
    {{\scriptscriptstyle\mathcal{O}}}% \scriptstyle
    {\scalebox{.7}{$\scriptscriptstyle\mathcal{O}$}}%\scriptscriptstyle
  }
\begin{document}
\title{Uplink Sensing Using CSI Ratio in Perceptive Mobile Networks}

\author{
{Zhitong Ni,~\IEEEmembership{Member,~IEEE},
 J. Andrew Zhang,~\IEEEmembership{Senior~Member,~IEEE}, 
 Kai Wu,~\IEEEmembership{Member,~IEEE},\\
 and  Ren Ping Liu,~\IEEEmembership{Senior~Member,~IEEE} 
 }
  \thanks{ Z. Ni, J. Andrew Zhang, K. Wu, and R. Liu are with the Global Big Data Technologies Centre, University of Technology Sydney, NSW, 2007, Australia (emails: zhitong.ni@uts.edu.au, andrew.zhang@uts.edu.au, kai.wu@uts.edu.au, renping.liu@uts.edu.au).}  
 }

\maketitle

\begin{abstract}
Uplink sensing in perceptive mobile networks (PMNs), which uses uplink communication signals for sensing the environment around a base station, faces challenging issues of clock asynchronism and the requirement of a line-of-sight (LOS) path between transmitters and receivers. The channel state information (CSI) ratio has been applied to resolve these issues, however, current research on the CSI ratio is limited to Doppler estimation in a single dynamic path. This paper proposes an advanced parameter estimation scheme that can extract multiple dynamic parameters, including Doppler frequency, angle-of-arrival (AoA), and delay, in a communication uplink channel and completes the localization of multiple moving targets. Our scheme is based on the multi-element Taylor series of the CSI ratio that converts a nonlinear function of sensing parameters to linear forms and enables the applications of traditional sensing algorithms.
Using the truncated Taylor series, we develop novel multiple-signal-classification grid searching algorithms for estimating Doppler frequencies and AoAs and use the least-square method to obtain delays. Both experimental and simulation results are provided, demonstrating that our proposed scheme can achieve good performances for sensing both single and multiple dynamic paths, without requiring the presence of a LOS path.
\end{abstract}

\begin{IEEEkeywords}
Integrated radar sensing and communication (ISAC), parameter extraction, perceptive mobile network, uplink sensing.
\end{IEEEkeywords}

\section{Introduction}\label{sec-system}
Perceptive Mobile Network (PMN) \cite{lushan,lushanSvy} is a recently proposed next-generation mobile network based on  joint radar-communication technology. The concept of PMN was proposed in \cite{lushan} and then elaborated in \cite{lushanSvy}. In contrast to current communication-only mobile networks, PMNs are expected to serve as ubiquitous sensing networks while providing uncompromised mobile communication services.
Integrated sensing and communication (ISAC) shows the prospect of realizing dual-function devices with reduced cost, packed size, smart functions, and uncompromised service quality.
A key link facilitating this is that the communication channel state information (CSI) resembles the radar channel  \cite{kumari,strum1}. 

As discussed in \cite{lushanSvy}, there are three main types of sensing methods using the received communication signals in PMNs. 
They are named   uplink sensing \cite{abu2018performance,huang2021mimo,li2021outer,zheng2017super,passive10}, downlink active sensing \cite{ali2020leveraging,garcia2016location,friedlander2007waveform},  and  downlink passive sensing \cite{Cataldo20,caoning20}.
In view of hardware cost and required facility changes, uplink sensing is the most viable way for realizing radar functions in PMNs.

In uplink sensing, multiple user equipements (UEs) send uplink signals to one base station (BS) for data transmission \cite{abu2018performance}.
When the number of UEs is large enough, the targets around the BS can be completely covered and the BS can perform simultaneous data transmission and target detection.
In \cite{huang2021mimo}, the authors designed an uplink channel estimation and sensing scheme based on deep learning.
The authors in \cite{li2021outer} analyzed the Cram\'{e}r-Rao bound for the uplink ISAC and concluded that the uplink multi-path environment is beneficial for improving the radar sensing accuracy. 
Besides estimating the element-wise channel, parameter extractions, which only extract the parameters of interest from the overall channel, can also be adopted for obtaining radar channels \cite{zhang2010direction,gudelay,mehanna2015maximum}.    
Some papers discussed how to extract the parameters of the ISAC channel environment.
In \cite{kong2018joint}, the authors used a low-rank tensor metric to extract three parameters including delay, angle, and Doppler of targets.
In \cite{sanson2020high}, the authors proposed a range-and-Doppler estimation scheme based on  multiple-signal classification (MUSIC) estimators. 
These papers assumed perfect synchronization between transceivers. The synchronization is not easy to realize between BS and multiple UEs since this process can be time-consuming \cite{hyder2016zadoff}. 
When the transmitter is asynchronous with the BS, there exist timing offset (TO) and carrier-frequency offset (CFO) in the channel, which need to be removed for sensing targets \cite{IndoTrack,widar2.0,wang2022single}. 

Recently, some WiFi-sensing papers have dealt with asynchronous transceiver setups and obtained key parameters including delay, angle-of-arrival (AoA), and Doppler frequency. 
In \cite{IndoTrack}, cross-antenna cross-correlation (CACC) was applied to obtain the AoA with commodity WiFi devices. In \cite{widar2.0}, CACC was used to resolve the ranging estimation problem for passive human tracking using a single WiFi link.   In \cite{wang2022single}, the authors also applied CACC to cancel the offsets and used the modulus of received signals to obtain the parameter of a human target. The CACC operation results in mirrored parameters in the output. The authors in \cite{widar2.0} used the average signal to suppress the mirrored side product.  In \cite{niTSP}, the authors considered asynchronous PMNs and perfectly canceled the mirrored unknown parameters using a mirrored MUSIC. 
 All of these works are based on CACC operations, and they would require a fixed line-of-sight (LOS) path and other assumptions for system setups \cite{zhang2022integration}. 
 Another way to perfectly cancel the offsets is to use division/ratio, rather than the cross-correlation, between the signals (CSI) obtained on different antennas \cite{zeng2019farsense}.  The authors in \cite{zeng2020multisense} proceeded to obtain parameters of multiple targets using the CSI ratio, which would lead to an unreachable hardware requirement.
All of these WiFi-sensing-based papers could be adopted in the uplink sensing in PMNs but problems would occur, because the LOS path can be obstructed and the channel fluctuation is more severe than that in an indoor environment,  which reduces the sensing resolution and substantially raises both false alarm and miss rate \cite{Laoudias2018}. 
In the application where multiple dynamic/moving objects are needed to be detected in the PMNs, most of the previous papers cannot be used either as they can detect only one moving target.

Motivated by the fact that multiple moving targets should be separately estimated in the ISAC systems, this paper develops an uplink sensing scheme that obtains key sensing parameters, including Doppler frequency, AoA, and propagation delay, of all moving targets for localization. Under an uplink channel of PMN, we perform the uplink sensing based on the unprecedentedly employed Taylor series of the CSI ratio. Compared with CACC, the CSI ratio has no requirement for a LOS path and can extract the specific targets in movements. This work can also be used in other applications, such as WiFi sensing and indoor tracking. The main contributions of this paper are

\begin{itemize}
\item We use the Taylor series to convert the CSI ratio from a nonlinear function into a linear function, which enables us to detect the moving targets excluding the asynchronous offsets without requiring a LOS path. Even with the asynchronous offsets, the proposed method can still benefit the extraction of parameters of moving targets. We also analyze the convergence of the Taylor series of the CSI ratio.

\item We extract key parameters exclusively that belong to dynamic paths of the ISAC channel. For Doppler frequency, we reconstruct the signal variation in the temporal domain. The zero frequency is suppressed and the non-zero Doppler frequencies can be extracted in the proposed Doppler estimator.

\item We form a manifold, such that the vectorized manifold is only influenced by the AoAs and known received signals. The vectorized manifold increases the AoA resolution but is ineffective when there is only one dynamic path and the number of antennas is small. 

\item We proceed to propose a joint AoA and delay estimator for one dynamic path. We demonstrate the dynamic AoA and delay can be obtained as long as the overall static component is given. 

\item We propose an estimator for dynamic delays. Multiple dynamic delays are estimated individually in the estimation range, which increases the accuracy of the delay estimates.   
\end{itemize}

Notations: $\rm\bf a$ denotes a vector, $\rm\bf A$ denotes a matrix, italic English letters like $N$ and lower-case Greek letters like $\alpha$ are a scalar.
$  {\rm\bf A }^T, {\rm\bf A }^H, {\rm\bf A }^*$, ${\rm\bf A }^{-1}$, and ${\rm\bf A }^\dag$ represent   transpose, conjugate transpose, conjugate, inverse and pseudo inverse, respectively.   $\|{\rm\bf A }\|_F$ is Frobenius norm of a matrix. 

\section{System and Channel Models}\label{SyS}

We consider the uplink communication and sensing in a PMN, as shown in Fig. \ref{model}.  
Multiple static UEs communicate with one static BS that uses received uplink signals for both communication and sensing. Each UE has one antenna. The BS uses a uniform linear array (ULA) of $N$ antennas.  
The uplink channel between the BS receiver and the UE's transmitter has multiple paths including both static and dynamic ones. 
The static paths refer to the LOS path, the paths reflected by static objects, and the ones that have negligible moving speed. The Doppler frequencies of static paths are assumed to be zeros. The dynamic paths are reflected by moving objects, such as vehicles.  The Doppler frequencies of dynamic paths are non-zeros and cause temporal phase variations in CSI.  
Since all UEs are assumed to be static, the uplink channel mainly consists of static paths and probably has several dynamic paths. In this paper, we treat these two types of paths differently and focus on estimating the parameters of dynamic paths solely.

\begin{figure}[t]
    \centering
    \includegraphics[scale=0.5]{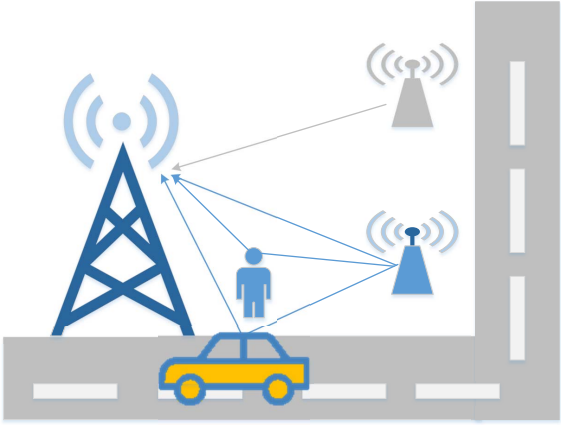}
    \caption{Illustration of uplink communication and sensing.}
    \label{model}
\end{figure}

Although logical channels are used in mobile networks and signals are transmitted in well-defined timeslots, we adopt a simplified packet structure to generate transmitted signals. In each packet, training symbols, denoted as preambles, are followed by a sequence of data symbols. Orthogonal frequency-division multiplexing (OFDM) modulation is applied across the whole packet. The data symbols can be empty if the packet is a demodulation reference signal (DMRS). For both preamble and data symbols, each of them has $G$ subcarriers with a subcarrier interval of $1/T$, where $T$ denotes the length of an OFDM symbol. Each of the OFDM symbols is prepended by a cyclic prefix (CP) of period
$T_{\rm C}$. The $m$th transmitted packet at the UE's baseband can be expressed as \cite{gudelay,strum13}
\begin{align}\label{preamble}
s(t|m)= &\sum\limits_{g=0}^{G-1} \exp\left(j2\pi g\frac tT\right){\rm rect}\left(\frac{t}{T+T_{\rm C}}\right)  x[m,g],\notag\\
&m\in\{0,\cdots,M-1\}, g\in\{0,\cdots,G-1\},
\end{align}
where $x[m, g]$ is a preamble transmitted on the $g$th subcarrier of the $m$th OFDM packet and ${\rm rect}\left(\frac{t}{T+T_{\rm C}}\right)$ denotes a rectangular window of length $T+T_{\rm C}$. For notational simplicity, we let one specific UE occupy the whole frequency band of $G/T$ and omit the index related to different UEs, but the proposed scheme can be readily applied to the case of multiple UEs by using other subcarrier assignment \cite{sit}.
This packet structure is generalized and can be used to represent signals in many wireless devices, such as WiFi and Bluetooth, in addition to mobile networks. Therefore, the scheme presented in this paper can also be applied to all these systems.
 
In this paper, we assume there are $L_S$ static paths and $L$ dynamic paths. Without loss of generality, we let the first $L$ paths, $1 \leq l\leq L$, be dynamic ones, and the rest $L_S$ paths, $L+1\leq l \leq L+L_S$, be static ones. 
Let $\alpha_l$, $f\dop$, $\tau_l$ and $\theta_l$ denote the complex channel gain, Doppler frequency, delay (propagation delay), and AoA of the $l$th path, $(1\leq l\leq L+L_S)$, respectively. 
Since there is typically no synchronization at clock level between BS and UEs, the received signal has an unknown time-varying TO, denoted as $\tau_{\rm O}[m]$, associated with the delay, even if the packet level synchronization is achieved. 
Hence, the total time delay during the signal propagation as seen by BS equals  $\tau_l + \tau_{\rm O}[m]$. 
There also exists an unknown time-varying CFO due to the asynchronous carrier frequency, denoted as $f_{\rm O}[m]$.   Assume $M$ packets are continuously transmitted with the same interval that is integer times of $T$, denoted as $T_{\rm A}$. 
The channel model is given by 
\begin{align}\label{ht}
&h(t|m)=\sum\limits_{l=1}^{L+L_S}\alpha_l \times\notag\\
&\delta\left({t-\tau_l-\tau_{\rm O}[m]-mT_{\rm A}- {(f\dop+f_{\rm O}[m]) c}t/{f_c}}\right){\bf a}(\p_l),
\end{align}
where  $\delta(t)$ is an impulse signal, $c$ is the speed of light, $f_c$ is the carrier frequency, 
${\bf a}(\p_l)=\exp[j\p_l(0,1,\cdots, N-1)]^T$, is the array response vector of size $N\times 1$, with $\p_l=\frac{2\pi d}{\lambda}\sin\theta_l$, $d$ denoting the antenna interval, $\lambda$  denoting the wavelength, and $\theta_l$ denoting the AoA from the $l$th path.

The received time-domain signal corresponding to \eqref{preamble} and \eqref{ht} can be represented as \cite{widar2.0}
\begin{align}\label{rt}
&{\bf y}(t|m)=
\sum\limits_{l=1}^{L+L_S}\alpha_l \times \notag\\
&e^{j2\pi mT_{\rm A}(f\dop+f_{\rm O}[m] ) }  s(t-\tau_l-\tau_{\rm O}[m]|m ) {\bf a}(\p_l)+{\bf w}(t|m),
\end{align}
where ${\bf w}(t|m)$ is a complex additive-white-Gaussian-noise (AWGN) vector with zero mean and variance of $\sigma^2$.

Recall that we only use the preambles, $x[m,g]$, for sensing. Hence,   $x[m,g]$ is available at the BS   and  can be easily removed by multiplying $(x[m,g])^{-1}$. 
After removing CP and $x[m,g]$, we transform the time-domain signal into the frequency domain via $G$-point fast-Fourier-transform (FFT)'s. 
Referring to \eqref{rt} and neglecting the noise, the received frequency-domain signal is
\begin{align}\label{rg}
 & y_n[m,g]\notag\\
=& \mathcal F\{h(t|m)*s(t|m)\}(x[m,g])^{-1}\notag\\
=&\sum\limits_{l=1}^{L+L_S}\alpha_l e^{jn\p_l} e^{j2\pi mT_{\rm A}(f\dop+f_{\rm O}[m])}
  e^{-j2\pi \frac{g}{T}(\tau_l+\tau_{\rm O}[m] )}  \notag\\
=& \sum\limits_{l=1}^{L }\alpha_l e^{jn\p_l} e^{j2\pi mT_{\rm A}(f\dop+f_{\rm O}[m])}
  e^{-j2\pi \frac{g}{T}(\tau_l+\tau_{\rm O}[m] )} + \notag\\
  & \sum\limits_{l=L+1}^{L+L_S}\alpha_l e^{jn\p_l} e^{j2\pi mT_{\rm A} f_{\rm O}[m] }
 e^{-j2\pi \frac{g}{T}(\tau_l+\tau_{\rm O}[m])} \notag\\
\triangleq&(D_n[m,g]+S_n[g]) e^{j2\pi mT_{\rm A} f_{\rm O}[m]} e^{-j2\pi{\frac gT}\tau_{\rm O}[m]},\notag\\
&n\in\{0,\cdots,N-1\},
\end{align}
where $n$ denotes the index of antennas at BS, $*$ denotes the convolution between two signals, $\mathcal F\{\cdot\}$ is the FFT function,  $S_n[g]=\sum\limits_{l=L+1}^{L+L_S}\alpha_l e^{jn\p_l} 
 e^{-j2\pi \frac{g}{T} \tau_l }$ is the static component without the offsets, and $D_n[m,g]=\sum\limits_{l=1}^{L}\alpha_l e^{jn\p_l} e^{j2\pi mT_{\rm A} f\dop } e^{-j2\pi \frac{g}{T} \tau_l }$ is the dynamic component without the offsets. Note that the Doppler frequencies of the static paths are zeros.    The parameters to be estimated are the dynamic ones, i.e., $\tau_l$, $\p_l$, and $f\dop$, $1\leq l\leq L$.

\section{Taylor Series of CSI Ratio}
 
In typical cases where the number of dynamic paths is much smaller than that of static paths, it would be convenient to extract the dynamic parameters without knowing  the static paths as only the moving targets are  of interest to radar.  
The CSI ratio enables such a requirement. 
Referring to \eqref{rg} and neglecting the noise term, the CSI ratio between the $n$th antenna and the $(n-q)$th antenna is given by
\begin{align}\label{CSIR}
&\xi_{n,n-q}[m,g]\notag\\
=&\frac{y_n[m,g]}{y_{n-q}[m,g]} \notag\\
=& \frac{(S_n[g] + D_n[m,g]) e^{j2\pi mT_{\rm A} f_{\rm O}[m]} e^{-j2\pi{\frac gT}\tau_{\rm O}[m]}}{ (S_{n-q}[g] + D_{n-q}[m,g])e^{j2\pi mT_{\rm A} f_{\rm O}[m]} e^{-j2\pi{\frac gT}\tau_{\rm O}[m]} }\notag\\
=&\frac{S_n[g]+ \sum\limits_{l=1}^{L}\alpha_l e^{jn\p_l} e^{j2\pi mT_{\rm A} f\dop } e^{-j2\pi \frac{g}{T} \tau_l }}{S_{n-q}[g]+ \sum\limits_{l=1}^{L}\alpha_l e^{j(n-q)\p_l} e^{j2\pi mT_{\rm A} f\dop } e^{-j2\pi \frac{g}{T} \tau_l }} \notag\\
\triangleq  & \frac{S_n[g] +  \sum\limits_{l=1}^{L} z_{n,l}[m,g]}{S_{n-q}[g] + \sum\limits_{l=1}^{L}e^{-j q\p_l} z_{n,l}[m,g]}
\triangleq   f(\Z_m), \notag\\
&q \in\{n-N+1,\cdots, n\},
\end{align}
where $ z_{n,l}[m,g] =\alpha_l e^{jn\p_l} e^{j2\pi mT_{\rm A} f\dop } e^{-j2\pi \frac{g}{T} \tau_l }$ and  
$\Z_m=[ z_{n,1}[m,g],\cdots, z_{n,L}[m,g] ]^T$. Note that different from existing works 
\cite{zeng2019farsense,zeng2020multisense}, we consider a general and more complicated case where multiple dynamic and static paths are present. 
Also note $f(\Z_m)$ is related to $g, q$, and $n$ as well, while we omit these subscripts, $g$, $q$, and $n$, for the notational simplicity of the following derivations. 
From \eqref{CSIR}, it is noted that both TOs and CFOs are fully canceled.  
It shall be highlighted that the offsets can only be removed by dividing the signals across the spatial domain. Otherwise, the ratio will involve the offsets back across other domains, i.e., $g$ or $m$.

Since there are $L$ dynamic components, the CSI ratio is a multi-element function with respect to (w.r.t.) $L$ $ z_{n,l}[m,g]$'s in the temporal domain. Note that we have stacked these $L$ variables into the vector, $\Z_m$. By using the multi-element Taylor series at the $m$th packet, the CSI ratio can be represented as
\begin{align} \label{CSITy} 
 f(\Z ) 
=  &    f(\Z_{m}) + [ \nabla f(\Z_{m})]^T(\Z-\Z_m)\notag\\
&  + \frac1{2!}[(\Z-\Z_m)]^T H(\Z_m)[\Z-\Z_m] +\smallo^3(\Z),
\end{align}
where $\nabla f(\Z_{m})=\left[ \frac{\partial f}{\partial z_1 }, \cdots, \frac{\partial f}{ \partial z_L }\right]^T$  and $H(\Z_m)$ is
\begin{align} 
   H(\Z_m) = \left[\begin{array}{ccc}
\frac{\partial^2f}{\partial z_1^2}   &\cdots  &\frac{\partial^2f}{\partial z_1\partial z_L }\\
\vdots&\ddots&\vdots\\
\frac{\partial^2f}{\partial z_L\partial z_1}&\cdots&\frac{\partial^2f}{\partial z_L^2}\\
\end{array}\right],
\end{align}
where $z_l$ is the brief notation for $ z_{n,l}[m,g]$.
 Referring to \eqref{CSIR}, the 0th-order, the 1st-order, and the 2nd-order derivatives are given by
\begin{align} \label{0thder}
  f (\Z_m) = \frac{y_n[m,g] }{y_{n-q}[m,g]  }=\xi_{n,n-q}[m,g],
\end{align}
\begin{align}\label{1thder}
 & f^{(1)}(z_l) =\frac{\partial f}{\partial z_l } \notag\\
 =  &  \frac{y_{n-q}[m,g]-e^{-jq\p_l}y_n[m,g]}{y_{n-q}[m,g]^2} e^{j2\pi mT_{\rm A} f_{\rm O}[m]} e^{-j2\pi{\frac gT}\tau_{\rm O}[m]} \notag\\
 \triangleq &  h_{n,q}^{m,g}(\p_l) e^{j2\pi mT_{\rm A} f_{\rm O}[m]} e^{-j2\pi{\frac gT}\tau_{\rm O}[m]},
\end{align}
and
\begin{align}\label{2thder}
   &f^{(2)}(z_{l_1},z_{l_2})\notag\\
 = &\frac{\partial^2f}{\partial z_{l_1}\partial z_{l_2}}\notag\\ 
 \triangleq&  H_{n,q}^{m,g}(\p_{l_1},\p_{l_2}) e^{j2\pi mT_{\rm A} 2f_{\rm O}[m]} e^{-j2\pi{\frac gT}2\tau_{\rm O}[m]},\notag\\
 &l_1\in\{1,\cdots,L\}, l_2\in\{1,\cdots,L\},
\end{align}
respectively, where
\begin{align} 
   h_{n,q}^{m,g}(\p_l) = \frac{y_{n-q}[m,g]-e^{-jq\p_l}y_n[m,g]}{y_{n-q}[m,g]^2}, 
\end{align}
and
\begin{align} 
    &H_{n,q}^{m,g}(\p_{l_1},\p_{l_2}) \notag\\
	= &  \frac{ 2e^{-jq\p_{l_2}} e^{-jq\p_{l_1}}y_n[m,g]   }{y_{n-q}[m,g]^{-3}}- \frac{ e^{-jq\p_{l_2}}+e^{-jq\p_{l_1}}    }{y_{n-q}[m,g]^{-2}}.
\end{align}
 
There is an interesting phenomenon that the offsets (TOs and CFOs) are added back in the derivatives of the Taylor series, while the CSI ratio should have removed the offsets. The received signal, $y_n[m,g]$, equals $(S_n[g]+D_n[m,g])e^{j2\pi mT_{\rm A} f_{\rm O}[m]} e^{-j2\pi{\frac gT}\tau_{\rm O}[m]}$, which means that the received signal intrinsically contains those offsets. 
Hence,  the TOs are mixed with delays as long as $y_n[m,g]$ is involved in the expression of the Taylor series.

\begin{prop}
The Taylor series of CSI ratio are convergent when $| y_{n-q}[m,g]| \geq  2L \mathop{\max}\limits_l |\alpha_l| $. 
\end{prop}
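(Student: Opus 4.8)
The plan is to exploit the special structure of $f$: as a function of the stacked vector $\Z=[z_1,\dots,z_L]^T$, both the numerator $S_n[g]+\sum_{l}z_l$ and the denominator $S_{n-q}[g]+\sum_{l}e^{-jq\p_l}z_l$ in \eqref{CSIR} are affine (degree-one) in the entries of $\Z$. Hence $f$ is a ratio of two affine maps, holomorphic everywhere except on the hyperplane where the denominator vanishes, and the convergence of its Taylor series \eqref{CSITy} about $\Z_m$ is controlled solely by how far $\Z_m$ sits from that hyperplane. First I would recenter the denominator at $\Z_m$ by writing it as $Q(1+w)$, where $Q\triangleq S_{n-q}[g]+\sum_{l}e^{-jq\p_l}z_{n,l}[m,g]$ is its value at the expansion point and $w\triangleq Q^{-1}\sum_{l}e^{-jq\p_l}\big(z_l-z_{n,l}[m,g]\big)$ gathers the perturbation.

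Next I would insert the geometric series $\tfrac{1}{1+w}=\sum_{k\ge0}(-w)^k$ and multiply by the (polynomial) numerator $P+\sum_l(z_l-z_{n,l}[m,g])$, with $P\triangleq S_n[g]+\sum_l z_{n,l}[m,g]$. Collecting terms of equal total degree in the perturbation reproduces exactly the order-by-order expansion in \eqref{CSITy}, and its $k$th-order term is bounded in modulus by a constant times $|w|^k$; consequently the series is dominated by a geometric series and converges absolutely whenever $|w|<1$. The whole problem therefore reduces to estimating $|w|$.

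The key estimate is a pair of triangle-inequality bounds. For the center value of the denominator, the final line of \eqref{rg} (read at antenna index $n-q$) gives $y_{n-q}[m,g]=Q\,e^{j2\pi mT_{\rm A}f_{\rm O}[m]}e^{-j2\pi\frac gT\tau_{\rm O}[m]}$, so that $|Q|=|y_{n-q}[m,g]|$ since both offset phasors are unimodular. For the numerator of $w$, each signal entry obeys $|z_{n,l}[m,g]|=|\alpha_l|$ because every exponential factor has unit modulus, and the coefficients $e^{-jq\p_l}$ are unimodular as well; using $|z_l|=|\alpha_l|$ for admissible signal values, the triangle inequality yields $\big|\sum_l e^{-jq\p_l}(z_l-z_{n,l}[m,g])\big|\le\sum_l|z_l-z_{n,l}[m,g]|\le\sum_l 2|\alpha_l|\le 2L\max_l|\alpha_l|$. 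Combining the two gives $|w|\le 2L\max_l|\alpha_l|/|y_{n-q}[m,g]|$, which does not exceed one exactly when $|y_{n-q}[m,g]|\ge 2L\max_l|\alpha_l|$, establishing the claim.

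The step I expect to require the most care is the reduction from multivariate Taylor convergence to the scalar geometric bound: because $f$ is affine-over-affine with a single hyperplane of singularities, I must argue that grouping the expansion by total degree (as \eqref{CSITy} does) leaves a series dominated term-by-term by $|w|^k$, and that multiplying by the polynomial numerator does not enlarge the relevant factor beyond $|w|$. A secondary subtlety is the boundary case: the hypothesis only yields $|w|\le 1$, whereas strict geometric convergence needs $|w|<1$; I would note that attaining $\sum_l|z_l-z_{n,l}[m,g]|=2L\max_l|\alpha_l|$ requires all path phasors to align simultaneously and with equal gains, a non-generic configuration, so that $|w|<1$ holds strictly for the signal values actually encountered.
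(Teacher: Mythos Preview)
Your argument is correct and reaches the same sufficient condition, but it proceeds by a genuinely different route than the paper. The paper's Appendix~\ref{App1} first derives a closed-form expression for the general $k$th partial derivative of $f$ (their equation~\eqref{guinak}), then bounds each individual mixed term $\tfrac{1}{k!}f^{(k)}(z_{l_1},\dots,z_{l_k})\prod_i(z-z_{l_i})$ by $\tfrac{|y_n|+|y_{n-q}|}{|y_{n-q}|}\bigl(2\max_l|\alpha_l|/|y_{n-q}|\bigr)^k$, and finally multiplies by the $L^k$ choices of indices to obtain the factor $2L\max_l|\alpha_l|$ in the ratio. You instead exploit the affine-over-affine structure directly: factoring the denominator as $Q(1+w)$ with $|Q|=|y_{n-q}[m,g]|$ reduces everything to the scalar geometric series in $w$, and a single triangle inequality gives $|w|\le 2L\max_l|\alpha_l|/|y_{n-q}[m,g]|$. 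Your route is shorter and avoids the (omitted) induction for~\eqref{guinak}; the paper's route has the side benefit of producing explicit derivative formulas that are reused later to build higher-order basis vectors. Your caution about the step of multiplying by the polynomial numerator is well placed but easily discharged: writing $f=(P+u)/(Q(1+w))$ with $u=\sum_l(z_l-z_{n,l}[m,g])$, the degree-$k$ part is $\tfrac{P}{Q}(-w)^k+\tfrac{u}{Q}(-w)^{k-1}$, and since $|u|/|Q|$ obeys the same bound as $|w|$ you recover precisely the paper's constant $(|y_n|+|y_{n-q}|)/|y_{n-q}|$ in front of $\rho^k$. Your remark on the boundary case $|w|=1$ is also apt; the paper's proof has the same non-strict inequality and tacitly relies on the same genericity.
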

See proofs in Appendix \ref{App1}.
Note that $ y_{n-q}[m,g]$ contains the static component and $|\alpha_l|$ is the path gain of dynamic paths. The power of static paths is much stronger than that of dynamic paths. Hence, the condition is satisfied almost for sure.

Substituting \eqref{0thder}, \eqref{1thder}, and \eqref{2thder} into \eqref{CSITy} and letting $\Z$ be $\Z_{m+p}$, we can obtain
\begin{align} 
  &f(\Z_{m+p})= \xi_{n,n-q}[m+p,g]\notag\\
=& f(\Z_{m }) + \notag\\
&\sum\limits_{l=1}^L   h_{n,q}^{m,g}(\p_l)e^{j2\pi mT_{\rm A} f_{\rm O}[m]} e^{-j2\pi{\frac gT}\tau_{\rm O}[m]}\Delta z_l +\notag\\
&\frac12\sum\limits_{l_1=1}^L\sum\limits_{l_2=1}^L  H_{n,q}^{m,g}(\p_{l_1},\p_{l_2})e^{j2\pi mT_{\rm A} 2f_{\rm O}[m]} e^{-j2\pi{\frac gT}2\tau_{\rm O}[m]}\notag \times \\
& \Delta z_{l_1}\Delta z_{l_2}, p\in\{1,\cdots,P\},
\end{align}
where  $\Delta z_l=  z_{n,l}[m+p,g]- z_{n,l}[m,g]\triangleq z_l\cdot(  e^{j2\pi p T_{\rm A} f\dop }-1)$ and $z_{n,l}[m,g]$ is abbreviated as $z_l$. 
Letting  $\tilde z_l$ be  $z_l e^{j2\pi mT_{\rm A}f_{\rm O}[m] } e^{-j2\pi \frac{g}{T}  \tau_{\rm O}[m]}$, we have 
\begin{align}\label{eqsrs}
  &f(\Z_{m+p})\notag\\ 
=&   f(\Z_{m}) + \sum\limits_{l=1}^Lh_{n,q}^{m,g}(\p_l) \tilde z_l  \cdot(  e^{j2\pi p T_{\rm A} f\dop }-1)+\notag\\
 &\frac12\sum\limits_{l_1=1}^L\sum\limits_{l_2=1}^LH_{n,q}^{m,g}(\p_{l_1},\p_{l_2})  \tilde z_{l_1}^2 \tilde z_{l_2}^2 \times \notag\\
  &(  e^{j2\pi p  T_{\rm A} f_{{\rm D},l_1} }-1)\cdot(  e^{j2\pi p  T_{\rm A} f_{{\rm D},l_2} }-1).
\end{align}
Note that \eqref{eqsrs} can be transformed into $f(\Z_{m+p})-f(\Z_{m})$ that denotes the difference of CSI-ratio (D-CSIR), denoted as
\begin{align}\label{eqsrs1}
&\psi_{n,q}[m,p,g]\notag\\
=&f(\Z_{m+p})-f(\Z_m)\notag\\
  =&\xi_{n,n-q}[m+p,g]-\xi_{n,n-q}[m,g] \notag\\
 &n\in\{0,\cdots, N-1\}, q\in\{n-N+1,\cdots, n\}. 
\end{align}

Regarding the samples of D-CSIR, $f(\Z_{m+p})-f(\Z_{m})$, the Doppler frequencies of dynamic paths are clearly shown on the right-hand side of \eqref{eqsrs} and  can be retrieved by analyzing the phase variance of the D-CSIR in the temporal domain.  Unfortunately, the delays make both dynamic paths and static paths vary in the frequency domain, and hence, it is invalid to use the Taylor series w.r.t.  $g$. As for AoAs,  they do not suffer from the coupling of offsets. Traditional AoA estimation methods could be used but would involve all static paths.

\section{Dynamic Parameter Estimation }
In this section, we will propose a novel estimation scheme for obtaining Doppler frequencies, AoAs, and delays. 
Via using the CSI ratio, the scheme can exclusively extract the dynamic parameters. For the Doppler frequency, the proposed estimator is shown in section \ref{A-1}. The proposed AoA estimator is shown in section \ref{B-1} but it cannot solve the case when $L=1$ and the number of antennas is too small. Hence, we supplement a joint AoA and delay estimator in  section \ref{C-1}.  The general delay estimator is shown in section \ref{C-2}.

\subsection{Doppler Frequency Estimator}\label{A-1}

Intuitively, the Doppler frequencies can be obtained by observing the phase variance of the D-CSIR from $p=1$ to $p=P$. By assembling $p$  from 1 to $P$, we obtain a D-CSIR vector, denoted as ${\bf p}_{n,q}[m,g]= [\psi_{n,q}[m,1,g], \cdots,\psi_{n,q}[m,P,g]]^T$.
The selection of $P$ is limited. On the one hand,  the value of $P$ should be larger, such that we can use more samples of D-CSIR. On the other hand, a large value of $P$ would make the Taylor series invalid.   Since only the Doppler term is related to $p$, the MUSIC method can be  readily used to obtain the Doppler frequencies of the dynamic paths.  
Given the first two orders of the Taylor series, the non-linear CSI ratio becomes a linear function w.r.t. $(e^{j2\pi pT_{\rm A}f\dop}-1)$.

The MUSIC-type estimators require the basis vectors of ${\bf p}_{n,q}[m,g]$. Since the CSI ratio has been transformed into linear expressions via the Taylor series, the first-, the second-, and/or the higher-order Taylor series can be used to represent the basis vectors of ${\bf p}_{n,q}[m,g]$. 
The first- and second-order normalized basis vectors of ${\bf p}_{n,q}[m,g]$ are
\begin{align}\label{B1DOP}
{\bf b}_1(f) =\|{\bf b}_1(f)\|_F^{-1} \left([e^{j 2\pi  T_{\rm A} f},\cdots,e^{j2\pi PT_{\rm A}f}]^T-1 \right), 
\end{align}
and 
\begin{align} 
&{\bf b}_2(f,f') \notag\\
=& \|{\bf b}_2(f,f')\|_F^{-1} \left([e^{j 2\pi  T_{\rm A} f},\cdots,e^{j2\pi PT_{\rm A}f{\phantom'}}]^T-1\right) \times\notag\\
&\left([e^{j 2\pi   T_{\rm A} f'},\cdots,e^{j2\pi PT_{\rm A}f'}]^T-1\right), 
\end{align}
respectively, where $f$ and $f'$ are the test candidates for $f_{{\rm D},l_1}$ and $f_{{\rm D},l_2}$ in \eqref{eqsrs1}, respectively. 
With $L$ dynamic paths, there are $L$ 1st-order and $L^2$ 2nd-order basis vectors. Likewise, the number of 3rd-order basis vectors is $L^3$. 
When $L=1$, $f$ and $f'$ are the same value.
When $L\neq 1$, those harmonic vectors,  $f\neq f'$, would be inconvenient to set the test candidates. 
Note that the number of basis vectors of ${\bf p}_{n,q}[m,g]$ is no larger than $P$. Thanks to the limited number of basis vectors,  the harmonic vectors  can be dismissed.
Hence, we let all test candidates, $f$, $f'$, and .etc, be the same value in the following processing. Additionally, note that Appendix \ref{App1} has given the expression for all Taylor series. Hence, we can also construct the third-order basis vectors, denoted as ${\bf b}_3(f,f',f''),$ etc. The higher-order basis vectors can also be dismissed.
 
We fix $n=1$, $q=1$,  $g=0$, and stack the vectors,  ${\bf p}_{n,q}[m,g]$, from $m=0$  to $m=M-P-1$. 
The corresponding stacked matrix is $\bf P$. 
After dismissing the harmonic vectors and higher-order Taylor series, the required number of basis vectors equals $JL$. We select $J$, such that $JL$ is larger than the rank of $\bf P$. 
  Here, for simplicity of exposition, we assume $J=2$, which means that only the first two orders of the Taylor series are used.
Using the MUSIC method, the dynamic Doppler frequencies can be individually obtained by solving
\begin{align}\label{DopplerP}
f\dop={\rm Peak}^l\left( \frac{1}{\left\|[{\bf b}_1(f),{\bf b}_2(f,f)]^H{\bf N}_{P}\right\|_F^2}\right),
\end{align}
where ${\bf N}_P$ is the null-space of  $\bf P$ that is obtained from the left singular matrix of $\bf P$ and ${\rm Peak}^l()$ is a function that obtains $f$, such that the objective function reaches the $l$th highest peak. 
It is noted that there is a trivial solution, $f =0$, to  the problem of \eqref{DopplerP}, 
 because ${\bf b}_1(0)={\bf b}_2(0,0)={\bf 0}$. The normalization of   ${\bf b}_1(f)$ and ${\bf b}_2(f,f)$ can greatly suppress the peak generated by the trivial solution. Hence, the value of $f$ can be an arbitrary value except 0.

\subsection{AoA Estimator}\label{B-1}
For estimating AoAs of dynamic paths, similar to the proposed estimator above, we exploit the D-CSIR, such that the dynamic AoAs can be extracted solely.
From \eqref{eqsrs1}, we can observe that the AoAs are dependent on both $n$ and $q$. By assembling the samples in the spatial domain, $n$ and $q$, we also use the MUSIC-type estimators to obtain AoAs.
Note that the basis vectors in the spatial domain are related to $m$ and $g$ too, which is observed from the expressions of $h_{n,q}^{m,g}(\p_l)$ and $H_{n, q}^{m,g}(\p_{l_1},\p_{l_2})$. This indicates that $m$ and $g$ are coupled with the spatial domain and need to be fixed during the AoA estimation. Fortunately, the index, $p$, is not coupled with the spatial-domain samples. Therefore, for estimating AoAs, only measurements with different   $p$ can be stacked.

We form a spatial-domain matrix with the $(n+1,n-q+1)$th entry being the D-CSIR, $\psi_{n,q}[m,p,g] $,  $n\in\{0,\cdots,N-1\}$, $q\in\{n-N+1,\cdots, n\}$, given by
\begin{align}\label{AAA}
  &{\bf A}[m,p,g] \notag\\
= &\left[\begin{array}{ccc}
\psi_{0,0}[m,p,g]&   \cdots  & \psi_{0,-N+1}[m,p,g]\\ 
\psi_{1,1}[m,p,g]&  \cdots  & \psi_{1,-N+2}[m,p,g]\\
\vdots &\cdots&\vdots\\
\psi_{N-1,N-1}[m,p,g]&   \cdots  &\psi_{N-1,0}[m,p,g]\\
\end{array}\right].
\end{align} 
Note that the diagonal entries of ${\bf A}[m,p,g] $ are 0's since $\psi_{n,0}[m,g]=\xi_{n,n}[m+p,g]-\xi_{n,n}[m ,g]=0$. 
The number of effective entries in ${\bf A}[m,p,g]$ is $N^2-N$. With fixed $m, p$, and $g$, the matrix becomes a manifold influenced by AoAs only.

Referring to \eqref{eqsrs1}, we can obtain the basis vectors for each column of ${\bf A}[m,p,g]$. The first-order basis vector for the $(n'+1)$th column of ${\bf A}[m,p,g]$ is written as
\begin{align}\label{D1AoA}
   {\bf d}_1  (\p,n')    
 & =   \left[h_{0,-n'}^{m,g} (\p)e^{j0\p},   \cdots,  \right.\notag\\
&\left. h_{N-1,-n'+N-1}^{m,g} (\p)e^{j(N-1)\p}\right]^T, \notag\\
 & n'\in\{0,\cdots, N-1\},
\end{align} 
and the second-order basis vector  is written as
\begin{align}\label{D2AoA}
   & {\bf d}_2  (\p,\p',n')  \notag\\
 = &   \left[H_{0,-n'}^{m,g} (\p,\p')  e^{j0 },  H_{ 1,-n'+1}^{m,g} (\p,\p')  e^{j2 (\p+\p')  },   \cdots, \right.\notag\\
& \left.  H_{N-1,-n'+N-1}^{m,g} (\p,\p')  e^{j2(N-1)(\p+\p') }\right]^T, 
\end{align} 
where $\p$ and $\p'$ are test candidates for AoA estimation.

To use all columns of the manifold effectively, we vectorize the manifold into an $N^2\times 1$ vector.
Since the number of effective entries in ${\bf A}[m,p,g]$ is $N(N-1) $, the maximum number of dynamic AoAs that can be distinguished is $N(N-1)$.
The vectorized and normalized first-order basis vector is expressed as 
\begin{align}\label{DD1}
{\bf d}_1(\p)=\|{\bf d}_1(\p)\|_F^{-1}[{\bf d}_1^T(\p,0),\cdots,{\bf d}_1^T(\p,N-1)]^T,
\end{align} 
 and the vectorized and normalized second-order basis vector,
$ {\bf d}_2(\p,\p') $, is similarly obtained.

To remove the harmonic components, we let $\p' =\p$.  
The basis vectors of AoAs are dependent on $m$ and $g$ as well. Hence,  we  let $m=0$, $g=0$, and stack the vectors from $p=1$ to $p=P$ into a matrix, given by  
\begin{align}\label{barA}
 \bar {\bf A}=[{\rm vec}({\bf A}[0,1,0]),\cdots,{\rm vec}({\bf A}[0,P,0])]. 
\end{align} 
  Based on the MUSIC estimators, the AoAs are estimated by
\begin{align}\label{AoAP}
 \p_l={\rm Peak}^l \left(\frac{1}{\|[{\bf d}_1(\p),{\bf d}_2(\p,\p)]^H{\bf N}_{\bar A}\|_F^2}\right),
\end{align}
 where  ${\bf N}_{\bar A}$ is the null-space of $\bar{\bf A}$. 

There is a trivial solution to the problem of \eqref{AoAP} under a specific condition, as stated in Proposition \ref{Pro2}.
\begin{prop}\label{Pro2}
 $\p=0$ is the trivial solution to  \eqref{AoAP}  when $S_{n_1}[g] = S_{n_2}[g], \forall n_1, \forall n_2\in\{0,\cdots,N-1\}$ and $L=1$.
\end{prop}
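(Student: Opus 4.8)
The plan is to interpret the statement ``$\p=0$ is the trivial solution to \eqref{AoAP}'' as asserting that the MUSIC pseudo-spectrum in \eqref{AoAP} has a pole at $\p=0$, i.e. that $\|[{\bf d}_1(0),{\bf d}_2(0,0)]^H{\bf N}_{\bar A}\|_F^2=0$. By the orthogonality principle behind MUSIC, this Frobenius norm vanishes exactly when \emph{both} degenerate basis vectors ${\bf d}_1(0)$ and ${\bf d}_2(0,0)$ are orthogonal to the noise subspace ${\bf N}_{\bar A}$, equivalently when both lie in the signal subspace, i.e. the column space of $\bar{\bf A}$. Thus the entire proof reduces to a subspace-membership check for these two vectors.

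First I would pin down the signal subspace under the hypothesis $L=1$. From the truncated Taylor expansion \eqref{eqsrs}, for a single dynamic path each column ${\rm vec}({\bf A}[0,p,0])$ of $\bar{\bf A}$ in \eqref{barA} depends on the stacking index $p$ only through the scalars $(e^{j2\pi pT_{\rm A}f_{{\rm D},1}}-1)$ and its square, while its spatial pattern over the index pair $(n,n')$ is fixed and coincides with the true-angle basis vectors ${\bf d}_1(\p_1)$ and ${\bf d}_2(\p_1,\p_1)$ from \eqref{D1AoA}--\eqref{D2AoA}. Hence the column space of $\bar{\bf A}$ equals (generically) ${\rm span}\{{\bf d}_1(\p_1),{\bf d}_2(\p_1,\p_1)\}$, and it suffices to show ${\bf d}_1(0),{\bf d}_2(0,0)\in{\rm span}\{{\bf d}_1(\p_1),{\bf d}_2(\p_1,\p_1)\}$.

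Next I would run the entrywise computation, fixing $m=0,g=0$ (where the offset factor is unity) and writing $y_n=\alpha_1 e^{jn\p_1}+S_n[0]$, abbreviating $e_n:=e^{jn\p_1}$. The key is to invoke the hypothesis $S_{n_1}[g]=S_{n_2}[g]$ to set $S_n[0]\equiv S$, so that the static term cancels in the relevant numerators: $y_{n'}-y_n=\alpha_1(e_{n'}-e_n)$ and $y_{n'}e_n-e_{n'}y_n=S(e_n-e_{n'})$. Substituting into $h_{n,n-n'}^{m,g}$ and $H_{n,n-n'}^{m,g}$ with the column index $q=n-n'$, every entry of ${\bf d}_1(0),{\bf d}_2(0,0),{\bf d}_1(\p_1),{\bf d}_2(\p_1,\p_1)$ then carries the common factor $(e_n-e_{n'})$ with denominator $(\alpha_1 e_{n'}+S)^{2}$ or $(\alpha_1 e_{n'}+S)^{3}$. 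This produces two identities: (i) ${\bf d}_1(0)=-\tfrac{\alpha_1}{S}{\bf d}_1(\p_1)$, so ${\bf d}_1(0)$ is parallel to ${\bf d}_1(\p_1)$; and (ii) ${\bf d}_2(0,0)=a\,{\bf d}_1(\p_1)+b\,{\bf d}_2(\p_1,\p_1)$, with the constants $a,b$ fixed by cancelling $(e_n-e_{n'})$ and matching the coefficient of $e_{n'}$ and the constant term of the resulting scalar identity. Both degenerate vectors then lie in the signal subspace, the denominator of \eqref{AoAP} vanishes at $\p=0$, and the claim follows.

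The hard part will be step (ii): unlike ${\bf d}_1(0)$, the vector ${\bf d}_2(0,0)$ is \emph{not} proportional to either true-angle vector, so I must verify that it decomposes into the two-dimensional span with coefficients $a,b$ that are independent of both $n$ and $n'$. This is precisely where the two hypotheses are indispensable. The condition $S_n[g]\equiv S$ is what forces the clean cancellation of the static term and the emergence of the common $(e_n-e_{n'})$ factor, reducing the matching to a polynomial identity in the single variable $e_{n'}$; if $S_n$ retained its $n$-dependence, no $n$-independent $(a,b)$ could exist. The condition $L=1$ is what keeps the signal subspace two-dimensional, so that membership in ${\rm span}\{{\bf d}_1(\p_1),{\bf d}_2(\p_1,\p_1)\}$ is equivalent to orthogonality to ${\bf N}_{\bar A}$; with $L>1$ the extra first- and second-order cross-path terms would enlarge the subspace and invalidate the reduction. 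For completeness I would also contrast this with the benign Doppler trivial solution of \eqref{DopplerP}: there the degenerate basis vectors themselves vanish and normalization suppresses the peak, whereas here ${\bf d}_1(0)$ and ${\bf d}_2(0,0)$ are nonzero, so the spurious pole at $\p=0$ survives normalization and must be excluded by restricting the search away from $\p=0$.
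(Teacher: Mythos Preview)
Your plan routes the subspace check through the true-angle span ${\rm span}\{{\bf d}_1(\p_1),{\bf d}_2(\p_1,\p_1)\}$, whereas the paper works directly with the exact columns of $\bar{\bf A}$. There is a genuine gap in your first step: you identify the column space of $\bar{\bf A}$ with that two-dimensional span by invoking the truncated Taylor expansion \eqref{eqsrs}, but $\bar{\bf A}$ in \eqref{barA} is built from the \emph{exact} D-CSIR, not its second-order truncation. Under $S_n\equiv S$ and $L=1$, the paper computes the exact $(n'+1)$th column of ${\bf A}[0,p,0]$ in closed form and finds it equal to $c(n',p)\,{\bf u}_{n'}$ with ${\bf u}_{n'}=(D_{n'}-D_n)_{n}$ and a scalar $c(n',p)=S(1-e^{j2\pi pT_{\rm A}f_{{\rm D},1}})\big/\big(Y_{n'}(S+e^{j2\pi pT_{\rm A}f_{{\rm D},1}}D_{n'})\big)$ that does \emph{not} factor in $n'$ and $p$. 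Hence the exact column space is $N$-dimensional (one direction per $n'$-block; the paper records ${\rm rank}(\bar{\bf A})=N$), not two-dimensional. Your reduction ``it suffices to show ${\bf d}_1(0),{\bf d}_2(0,0)\in{\rm span}\{{\bf d}_1(\p_1),{\bf d}_2(\p_1,\p_1)\}$'' is therefore unjustified as written. It can be repaired by additionally verifying that ${\bf d}_1(\p_1)$ and ${\bf d}_2(\p_1,\p_1)$ themselves have $n'$-blocks proportional to ${\bf u}_{n'}$ (they do), so that the two-dimensional span sits inside the exact $N$-dimensional column space; but you never argue this, and without it the Taylor-based identification is simply false.

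The paper's route is both shorter and avoids this issue. It computes ${\bf d}_1(0,n')$, ${\bf d}_2(0,0,n')$, and the exact column ${\rm v}_{n'}({\bf A}[0,p,0])$ under the hypotheses and observes that all three are scalar multiples of the \emph{same} vector ${\bf u}_{n'}$ (with constants $Y_{n'}^{-2}$, $-2Y_{n'}^{-3}$, and $c(n',p)$ respectively). Block by block the rank is one, so ${\rm rank}([\bar{\bf A},{\bf d}_1(0),{\bf d}_2(0,0)])={\rm rank}(\bar{\bf A})$, and ${\bf d}_1(0),{\bf d}_2(0,0)$ lie in the exact column space immediately---no reference to $\p_1$ and no Taylor truncation needed. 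In particular your anticipated ``hard part,'' finding $(n,n')$-independent coefficients $a,b$ expressing ${\bf d}_2(0,0)$ in the true-angle span, is unnecessary: once the common ${\bf u}_{n'}$ pattern is exposed, membership in the signal subspace follows for free, and the claim about ${\bf d}_2(0,0)$ not being proportional to either true-angle vector, while correct, is beside the point.
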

See proofs in Appendix \ref{AppB}. 
According to Proposition \ref{Pro2}, the trivial solution happens when there is only one dynamic path and the static components, $S_n[g]$, are the same for all antennas. Such a case rarely happens in PMN because the number of antennas for a BS is large enough to assure $S_n[g]$ to be different from one another. When there are only $2, 3$ or $4$ antennas, the trivial solution could exist and influence the accuracy of the AoA estimator.

To address this issue of our proposed AoA estimator, we would like to estimate the single dynamic AoA and delay together, which will be illustrated in the next subsection.
 
\subsection{  Joint AoA and Delay Estimator for One Dynamic Path  }\label{C-1}
This subsection provides how to obtain one dynamic AoA and delay.
Referring to \eqref{eqsrs}, the delay is not explicitly expressed and only exists in $\tilde z_l$. From the expression of $\tilde z_l$ above \eqref{eqsrs}, we see that the delay is mixed with TO. 
 
\begin{prop}\label{T1}
For a single dynamic path, the dynamic delay can be obtained from the CSI ratio if and only if $S_n[g]$ is known.
\end{prop}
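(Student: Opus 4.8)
The plan is to prove both directions of the biconditional by exploiting the algebraic structure of the single-path ($L=1$) CSI ratio in \eqref{CSIR}, which for one dynamic path reads
\begin{align}
\xi_{n,n-q}[m,g] = \frac{S_n[g] + z_{n,1}[m,g]}{S_{n-q}[g] + e^{-jq\p_1} z_{n,1}[m,g]},
\end{align}
with $z_{n,1}[m,g] = \alpha_1 e^{jn\p_1} e^{j2\pi m T_{\rm A} f_{{\rm D},1}} e^{-j2\pi \frac{g}{T}\tau_1}$ offset-free, since the TO and CFO cancel exactly in the ratio. The delay $\tau_1$ enters only through the subcarrier phase ramp $e^{-j2\pi \frac{g}{T}\tau_1}$ inside $z_{n,1}$, and the whole argument turns on whether that ramp can be disentangled from the unknown $g$-dependence of the static component.

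\emph{Necessity (``only if'').} I would establish a one-parameter invariance of the observables. Define the map that sends $\tau_1 \mapsto \tau_1 + \Delta$ together with $S_n[g] \mapsto S_n[g]\, e^{-j2\pi \frac{g}{T}\Delta}$ for every antenna $n$; equivalently, all static-path delays shift by the common $\Delta$, so the transformed $S_n[g]$ remains within its structural form as a sum over static paths. Under this map both numerator and denominator acquire the common factor $e^{-j2\pi \frac{g}{T}\Delta}$, so every $\xi_{n,n-q}[m,g]$ is left unchanged for all $n,q,m,g$. Hence, if $S_n[g]$ is unknown, $\tau_1$ is shiftable by an arbitrary $\Delta$ without altering any measurement, i.e.\ it is non-identifiable; this is the CSI-ratio counterpart of the classical delay/TO ambiguity. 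Knowing $S_n[g]$ fixes the absolute static-delay reference and forces $\Delta=0$, which breaks the ambiguity, so knowledge of $S_n[g]$ is necessary.

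\emph{Sufficiency (``if'').} With $S_n[g]$ (hence $S_{n-q}[g]$) known, I would invert the ratio algebraically. Rearranging gives
\begin{align}
z_{n,1}[m,g] = \frac{S_n[g] - \xi_{n,n-q}[m,g]\, S_{n-q}[g]}{\xi_{n,n-q}[m,g]\, e^{-jq\p_1} - 1},
\end{align}
whose right-hand side is fully known once a candidate AoA $\p_1$ is fixed. Since the left-hand side cannot depend on $q$, requiring the right-hand side to be $q$-invariant across the available antenna pairs (generically with nonzero denominator) pins down $\p_1$; this is precisely the joint AoA--delay search of Section \ref{C-1}. With $z_{n,1}[m,g]$ thus reconstructed and offset-free, its phase is linear in $g$ with slope $-2\pi\tau_1/T$, so a least-squares fit over subcarriers returns $\tau_1$.

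I expect the delicate part to be uniqueness in the sufficiency direction: proving that, once $S_n[g]$ is fixed, no spurious pair $(\p_1',\tau_1')\neq(\p_1,\tau_1)$ reproduces all measured ratios, i.e.\ that the $q$-consistency condition admits a unique solution. This requires ruling out degenerate array/subcarrier configurations (those making the denominator vanish or collapsing the array response), which I would dispatch with a genericity argument on $N$ and the antenna spacing. The necessity direction, by contrast, is a clean invariance computation and should be routine.
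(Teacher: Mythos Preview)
Your necessity argument is the same as the paper's: both exhibit the one-parameter gauge $\tau_1\mapsto\tau_1+\Delta$, $S_n[g]\mapsto S_n[g]\,e^{-j2\pi\frac{g}{T}\Delta}$ under which every CSI ratio is invariant, and both note that the shifted $S_n[g]$ is still realizable by static paths, so without knowledge of $S_n[g]$ the delay is non-identifiable.

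Where you differ is the sufficiency direction. The paper's proof block actually stops after necessity; the ``if'' part is left to the constructive recipe that follows in Section~\ref{C-1}, which first uses the \emph{Doppler} estimate $f_{{\rm D},1}$ to form a linear system over packets $m_0,m_1$, solves by LS for $S_n'[g]=S_n[g]/\bigl(\alpha_1 e^{jn\p_1}e^{-j2\pi\frac{g}{T}\tau_1}\bigr)$, and then divides by the known $S_n[g]$ to read off $\tau_1$ (and $\p_1$) from the phase of $S_n'[g]/S_n[g]$. Your route instead fixes $m$, inverts the ratio algebraically for $z_{n,1}[m,g]$ in terms of a candidate $\p_1$, and uses consistency across $q$ to pin down $\p_1$ before extracting $\tau_1$ from the subcarrier slope. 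Both are valid identifiability arguments; yours is tidier as a pure proof and does not presuppose the Doppler has been estimated, while the paper's version dovetails with the estimation pipeline already built in Sections~\ref{A-1}--\ref{B-1}. One small correction: your remark that the $q$-consistency step is ``precisely the joint AoA--delay search of Section~\ref{C-1}'' is inaccurate---that section leverages $f_{{\rm D},1}$ and a two-packet LS rather than any $q$-invariance test. Your acknowledged genericity caveat (nonvanishing denominator $\xi_{n,n-q}e^{-jq\p_1}-1$, equivalently $S_n[g]\neq e^{jq\p_1}S_{n-q}[g]$) is the right condition to state.
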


\begin{proof}
For a single dynamic path,  CSI ratio in \eqref{CSIR} can be rewritten as
\begin{align} 
 &\xi_{n,n-q}[m,g] \notag\\
=& \frac{S_n[g]e^{j2\pi  \frac gT x } + \alpha_1 e^{jn\p_1} e^{j2\pi mT_{\rm A} f_{{\rm D},1} }e^{j2\pi  \frac gT (x-\tau_1) }}{S_{n-q}[g]e^{j2\pi  \frac gT x } + \alpha_1 e^{j(n-q) \p_1} e^{j2\pi mT_{\rm A}  f_{{\rm D},1} } e^{j2\pi  \frac gT (x-\tau_1) } },
\end{align}
where $x$ is an arbitrary value.  Without knowing $S_n[g]$, it is impossible to differ $S_n[g]$ from $S_n[g]e^{j2\pi  \frac gT x }$ because both $S_n[g]$ and $S_n[g]e^{j2\pi  \frac gT x }$ can be generated using different values of static delays. Then, $\tau_1$ shows no difference with $\tau_1-x$. This means that no matter what $\tau_1$ becomes, the CSI ratio is the same value. 
Therefore, if the CSI ratio is used for obtaining $\tau_1$ in the dynamic path, $S_n[g]$ must be known first.
\end{proof}

According to Proposition \ref{T1}, without knowing  the static path, the dynamic delay   is impossible to be obtained from the CSI ratio. If using the ratio between adjacent $g$, the delay is entangled with TO and hence cannot be obtained either. The ratio between adjacent $m$  is too sensitive to the noise. Due to these reasons, we can see that using the CSI ratio alone is nearly impossible to obtain the delay, $\tau_1$.

Since the Doppler frequency of the dynamic path is known, all other terms excluding the Doppler-frequency term can be obtained from the CSI ratio  by solving
\begin{align}\label{CSIR2} 
\xi_{n, n-q }[m,g]=&\frac{y_n[m,g]}{y_{n-q }[m,g ]} = \frac{ S_n[g] + D_n[m,g]  }{  S_{n-q }[g ] + D_{n-q}[m,g ]  } \notag\\
=&  \frac{S_n [g]+ \alpha_1 e^{jn\p_1}e^{j2\pi mT_{\rm A} f_{{\rm D},1} } e^{-j2\pi \frac{g}{T} \tau_1 }}{S_{n-q} [g] + \alpha_1e^{j(n-q)\p_1}e^{j2\pi mT_{\rm A} f_{{\rm D},1} } e^{-j2\pi \frac{g }{T} \tau_1 }   }  \notag\\
=& \frac{S_n' [g] +  D'[m] }{S_{n-q}' [g] +   D'[m]    },   
\end{align}
where $D'[m] = e^{j2\pi mT_{\rm A} f_{{\rm D},1} }$ and $S_n'[g] = S_n[g]/\alpha_1e^{-jn\p_1}e^{ j2\pi \frac{g}{T}\tau_1}$. Note that the index $m$ and $g$ are separated in the last term of \eqref{CSIR2}. Using the previously estimated  $f_{{\rm D},1}$, $D '[m]$ becomes a known value. Then,  selecting  the CSI ratio over two or more packets, we can obtain $S_n'[g]$ by using the least-square (LS) method, i.e., 
\begin{align} 
\left[\begin{array}{c}
 S_n'[g ]\\
S_{n-q}'[g] \\
\end{array}\right]=\left[\begin{array}{ccc}
1,& -\xi_{n, q }[m_0,g] \\
1,& -\xi_{n, n-q }[m_1,g]\\ 
\end{array}\right]^{-1}\times\notag\\
\left[\begin{array}{c}
 \xi_{n,n-q }[m_0,g]  D '[m_0]-  D '[m_0] \\
 \xi_{n,n-q }[m_1,g]   D '[m_1]-  D '[m_1]\\
\end{array}\right],
\end{align}
where $m_0$ and $m_1$ are two selected packets.
It is noted that $m_0$ and $m_1$ should not be close to each other, otherwise they would result in a high noise floor. Generally, the interval between $m_0$ and $m_1$ is larger than $P$.

The phase angle of $S_n'[g]$ is equal to $(\angle S_n[g]) + 2\pi\frac gT\tau_1 -n\p_1$. According to Proposition \ref{T1}, the delay is impossible to be obtained from the CSI ratio when $S_n[g]$ is unknown.
 Note that $S_n[g]$ denotes the static component of a received signal without offsets. The phase angle of $S_n'[g]$ is time-varying due to the time-varying $\tau_1$ and $\p_1$ over a long training period. 
 
 When $ 2\pi\frac gT\tau_1 -n\p_1$ goes through a wide range within $[0,2\pi)$,  $S_n[g]$ can be seen as the expectation of $S_n'[g]$ , i.e.,
\begin{align} 
 S_n[g] \approx {\mathbb E}( S_n'[g] ).
\end{align}
Then, the dynamic delay and AoA can be extracted from the phase value of
\begin{align} 
S_n'[g]/S_n[g] =e^{ j(2\pi\frac gT\tau_1 -n\p_1)}.
\end{align}

\subsection{ Delay Estimator  }\label{C-2}
 
When AoAs can be obtained from section \ref{B-1}, we estimate delays similar to section \ref{C-1}. The main difference is that, here, the Doppler frequencies and AoAs have been obtained from section \ref{A-1} and \ref{B-1}, respectively. Hence, before estimating delays, we can match the obtained Doppler frequencies and AoAs of multiple dynamic paths.

Note   that $\bar{\bf A}$ in \eqref{barA} is related to both Doppler frequencies and AoAs. Hence, the pair of $f\dop$ and $\p_l$ can be matched by solving
\begin{align}
 ( f\dop,\p_{l'}) = \mathop{\max}\limits_{l,l'} \left|{\bf d}_1^H(\hat \p_{l'})\bar {\bf A} {\bf b}_1 (\hat f\dop)\right|, l,l'\in\{1,\cdots, L\},
\end{align}  
where ${\bf d}_1 (\hat \p_{l'})$ and ${\bf b}_1^T(\hat f\dop)$ are the basis vectors for the columns and rows of $\bar{\bf A}$, respectively, and their expressions are given by \eqref{DD1} and \eqref{B1DOP}, respectively.
 
We can obtain multiple delays from
\begin{align}
 &\xi_{n, n-q }[m,g] \notag\\
= & \frac{y_n[m,g]}{y_{n-q }[m,g ]} 
= \frac{ S_n[g] + D_n[m,g]  }{  S_{n-q }[g ] + D_{n-q}[m,g ]  } 
 \notag\\
=& \frac{S_n' [g]+ D_1'[m]+\sum\limits_{l=2}^L \frac {\alpha_l} {\alpha_1}  D_{ l}'[m] e^{jn(\p_l-\p_1)} e^{-j2\pi \frac{g}{T} (\tau_l-\tau_1) }}{S_{n-q}' [g] +D_1'[m]+ \sum\limits_{l=2}^L \frac {\alpha_l} {\alpha_1} D_{ l }'[m]  e^{jn(\p_l-\p_1)} e^{-j2\pi \frac{g}{T} (\tau_l-\tau_1) } },
\end{align}
where $D_{ l}'[m]=  e^{j2\pi mT_{\rm A} f\dop }$ and $S_n'[g] = S_n[g]/\alpha_1e^{-jn\p_1}e^{ j2\pi \frac{g}{T}\tau_1}$.
Note that $D_1'[m]$ is available.  Also using the LS method, we can obtain multiple dynamic delays from
\begin{align}\label{LSout}
&\left[  S_n'[g ],
S_{n-q}'[g],
\frac{\alpha_2}{ \alpha_1} e^{-j 2\pi \frac gT(\tau_2-\tau_1)}e^{jn(\p_l-\p_1)},
\cdots,\right.\notag\\
&\left.
\frac{\alpha_L}{ \alpha_1}e^{-j 2\pi \frac gT(\tau_L-\tau_1)}e^{jn(\p_l-\p_1)}  \right]^T \notag\\
&=\left[\begin{array}{ccc}
{\bf 1}^T \\
 -\xi_{n,n- q }[{\bf m}^T,g] \\
   D_{ 2 }'[{\bf m}^T ] -  \xi_{n, n-q }[{\bf m}^T,g]  \odot D_{ 2 }'[{\bf m} ^T] \\
  \vdots\\
    D_{ L  }'[{\bf m}^T ]-  \xi_{n, n-q }[{\bf m}^T,g]\odot  D_{ L }'[{\bf m}^T ] \\
\end{array}\right]^{T\dag}\notag\\
&\times
\left[\begin{array}{c}
 \xi_{n, n-q }[{\bf m},g] \odot D_{1 }'[{\bf m} ]- D_{1}' [{\bf m} ]\\ 
\end{array}\right],
\end{align}
where ${\bf m}=[m_0,\cdots,m_{L }]^T$ are the indexes of $(L+1)$ packets ranging from 0 to $M-1$.
From the equation above, as long as $\tau_1$ can be obtained, other delays are easy to be obtained from the $3$rd row to the last row of the LS output. We note that $\tau_1$ is obtained from $S_n'[g]$, which is realized in the same way as section \ref{C-1}.

\section{Experimental and Numerical Results}

In this section, we provide both experimental and numerical results to validate the proposed parameter estimators. We use both practical data collected by a 3-antenna  COTS WiFi device and the simulated data generated by MATLAB. 
\begin{figure*}[t]
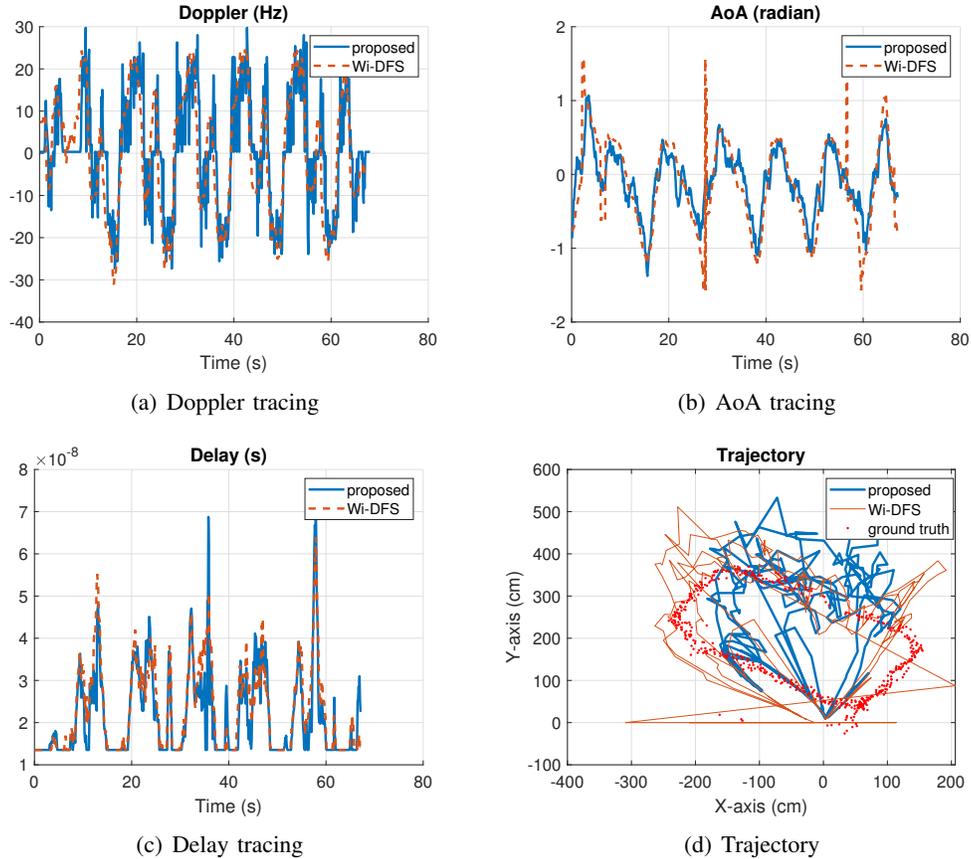
 
\centering
	\subfigure[Doppler tracing]{ 
			\includegraphics[scale=0.45]{pracDoppler1-eps-converted-to.pdf} 
	}
	\subfigure[AoA tracing]{  
			\includegraphics[scale=0.45]{pracAoA1-eps-converted-to.pdf} 
	}
	\subfigure[Delay tracing]{  
			\includegraphics[scale=0.45]{pracdelay1-eps-converted-to.pdf} 
	}
	\subfigure[ Trajectory]{  		  
	\includegraphics[scale=0.45]{Trace1-eps-converted-to.pdf} 
	}
	\caption{Tracing all three parameters and trajectory. The benchmarks include WiDFS \cite{wang2022single} and points cloud obtained by mmWave radar. }
	\label{Tracing}
\end{figure*}

\subsection{Experimental  Results}

In experimental results, the detailed setup is the same as that in \cite{wang2022single} as we implement our estimator by using their obtained raw data, where the receiver is a 3-antenna CTOS WiFi and the transmitter is one antenna that is 235 cm away from the receive antenna   $n=0$. The interval between the receive antenna  $n=0$  and  $n=1$  is 2.682 cm. The interval between the receive  antenna $n=1$ and $n=2$ is 2.251 cm. The carrier frequency is 5.32 GHz with the subcarrier interval being 312.5 kHz. The number of subcarriers, $G$, is 30. The sampling frequency, $1/T$, is 1 kHz and $T_{\rm A}$ is 0.1 s. It should be highlighted that the raw data in \cite{wang2022single} has a fixed phase diversity over different antennas due to hardware imperfection. Fortunately, such a phase diversity can be easily removed, such that the data is compatible with our estimators. The transformed received signal should be $y_n[m,g]\cdot e^{j\phi_n}$ with $\phi_n$ being the phase diversity. It should be also noted that the received signal needs to go through a low-pass filter before estimating the parameters because the raw data is mixed with high-frequency noises due to the network interface controller. We select the cutoff frequency as $60$ Hz. We let $P$ equal 30. The gap between $m_0$ and $m_1$ in our delay estimator is 30.

In Fig. \ref{Tracing}, we illustrate the Doppler frequency, AoA, delay, and trajectory obtained by processing the WiFi received signal. In practice, there is one moving human target in the indoor environment. In Fig. \ref{Tracing} (a), (b), and (c), our method shows a similar trend to the WiDFS method but provides more details. In Fig. \ref{Tracing} (b), the AoAs of the WiDFS method show fewer details and have mutated points around 28s and 58 s.
We need to point out that the delay is too small compared with $T$, which results in low accuracy of delay.
Hence, both our estimator and the WiDFS method use the Kalman filter to smooth the delay and plot the trajectory. The detailed setup of the Kalman filter can be referred to \cite{wang2022single}.
 As for the delay itself, we plot the original output of delays without the Kalman filter, as shown in Fig. \ref{Tracing} (c). We also observe that, in the practically obtained data, $S_n[g]$ can be approximately written as $|S_n[g]|e^{-\tau_0}$, where $\tau_0$ is the delay of the LOS path, as the LOS path is dominant in the received signal.
Fig. \ref{Tracing} (d) plots the trajectory calculated by the  AoA and the smoothed delay from 20s to 50s. According to the cosine law, the coordinates of the trajectory are expressed as
\begin{align}
x_{\rm traj}=d_r\sin(\theta) - d_0/2, y_{\rm traj}=d_r\cos(\theta),
\end{align}
where $d_r$, calculated by the cosine law with substituting delay and AoA, is the distance between the receive antenna $n=0$ and the human target, and $d_0=270$ cm.   The ground truth is obtained by the millimetre-wave (mmWave) radar device that is located beside the commodity WiFi.
Our obtained trajectory matches with the ground truth tightly, and we can see that the WiDFS's obtained trajectory has mutated curves due to the deviated AoAs. Most importantly, we should point out that our method does not require the existence of the LOS path even though the LOS path is dominant in the raw data, but the WiDFS method would require the existence of the LOS path to obtain both AoAs and delays.

\begin{figure}[t]
     \centering
	\subfigure[Error proportion]{  		  
	\includegraphics[scale=0.5]{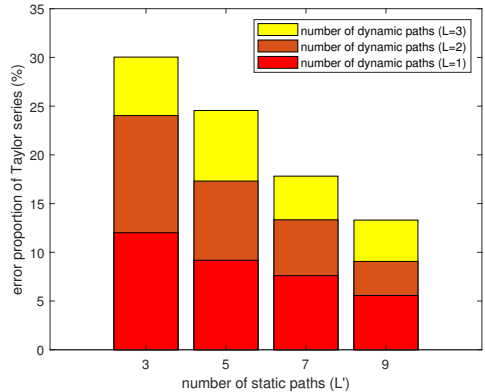} 
	}
	\subfigure[Convergence probability]{  
	\includegraphics[scale=0.5]{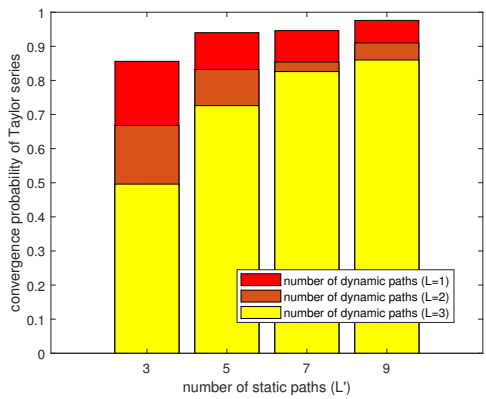}
	}
	\caption{Error proportion and convergence probability of Taylor series }
	\label{Taylor}
\end{figure}

\subsection{Numerical Results}

In numerical results, the carrier frequency is $3$ GHz. The number of subcarriers is $G=64$. The frequency bandwidth is $64$ MHz. Hence, the OFDM symbol period $T$ is 1 $\mu$s. The propagation delay is randomly distributed over $[0, 0.4]$ $\mu$s. The CP period $T_{\rm C}$ is $0.3$ $\mu$s. The approximate interval between two packets, $T_{\rm A}$, is 1 ms. We use  $M=128$ packets for the parameter estimation. The parameters remain unchanged in these packets.
 The velocity of targets ranges from -30 meter-per-second (mps) to 30 mps, and the Doppler frequency is randomly distributed over $[-0.3, 0.3]$ kHz. The AoAs of targets are random values uniformly distributed from $-\pi/2$ to $ \pi/2$. All the targets are modelled as point sources, and the radar cross-sections are assumed to be 1. The BS employs a ULA with $N=8$ antenna elements. There is one dynamic path and $L_S=5$ static paths unless mentioned individually.  The power of all paths is assumed to be equal, and hence, there would be no requirement for a dominant LOS path.

Fig. \ref{Taylor} shows error proportion and convergence probability of Taylor series versus $L$ and $L_S$. Mathmatically, the error proportation is given by
\begin{align}\label{ERR}
 e_{\rm Tay}(p)
 =& \sum\limits_{g=0}^{G-1}  \sum\limits_{m=0}^{M-1} \sum\limits_{n=1}^N   \frac 1{MGN|\xi_{n,q}[m+p,g]|^2} \left| \smallo^3(\Z) \right|^2.
\end{align}
Here, we let $p=2$.  Note that we regard the harmonic of the second order of the Taylor series as errors too. 
The error proportion can reflect the accuracy of the Taylor series. We only count in the error in which the absolute value is less than 1 w.r.t. different $m, g,$ and $n$. Otherwise,  the error is divergent and removed from the summation in \eqref{ERR}. Hence, Fig. \ref{Taylor} (b) supplements the corresponding convergence probability. From the figure, we observe that the error proportion decreases with $L_S$ increasing and with $L$ decreasing. This means that more static paths and less dynamic paths make the Taylor series convergent more quickly. Fig. \ref{Taylor} (b) also indicates the same conclusion.
In general, the number of static paths is far more than that of dynamic paths. When involving the LOS path, the power of static paths can be 10 times higher than that of the dynamic path, and hence, the expected error proportion should be around 5\% and the expected convergence probability is about 0.95.   In the worst case of $L=L_S$, which rarely happens, the convergence probability is around 0.5.
\begin{figure}[t]
    \centering
    \includegraphics[scale=0.5]{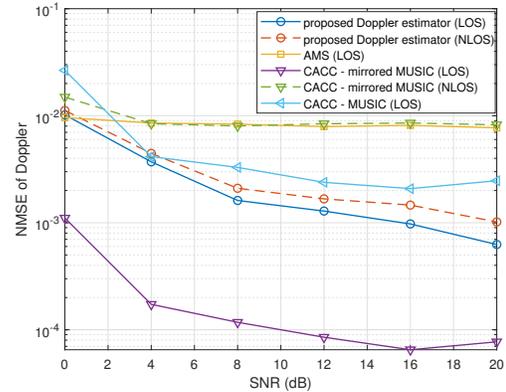}
    \caption{ NMSE versus SNR for Doppler frequency. }
    \label{NMSEDop}
\end{figure}
\begin{figure}[t]
    \centering
    \includegraphics[scale=0.5]{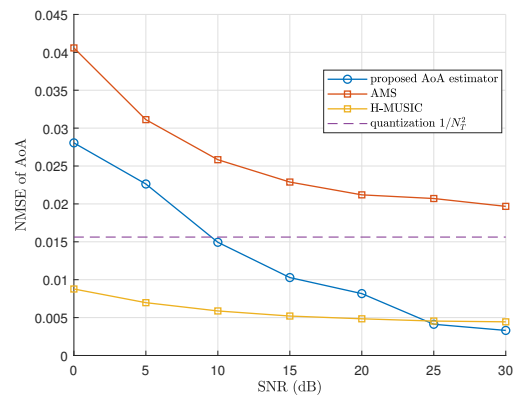}
    \caption{NMSE versus SNR for AoA.}
    \label{NMSEAoA}
\end{figure}
\begin{figure}[t]
    \centering
    \includegraphics[scale=0.5]{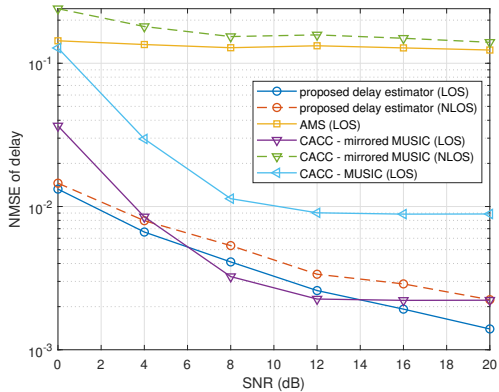}
    \caption{NMSE versus SNR for delay.}
    \label{NMSEDel}
\end{figure}
 
Fig. \ref{NMSEDop}, Fig. \ref{NMSEAoA}, and Fig. \ref{NMSEDel} plot the normalized mean-squared-error (NMSE) versus signal-to-noise ratio (SNR) of Doppler frequency, AoA, and delay, respectively.
We consider a case of $L=1$ and $L_S=5$ with and without a LOS path. The power of the LOS path is 10 dB higher than that of the NLOS path.
The parameter of our proposed estimators, $P$, equals $30$. Other system setups are given at the beginning of this subsection.
 Other uplink sensing benchmarks, which addressed the TO and CFO in asynchronous systems, are included in our simulations. 
 For Doppler and delay estimators, the benchmarks are the AMS method \cite{IndoTrack}, CACC with MUSIC, and CACC with mirrored MUSIC \cite{niTSP}. Due to the high computational complexity of the WiDFS method, it is not included in the comparisons. As for the AoA estimator, we compare our AoA estimator with the AMS, H-MUSIC \cite{high_reso}, and the average NMSE of quantized grids, i.e., $1/N_T^2 \approx 1.56\times 10^{-2}$. The figures demonstrate that our proposed estimator outperforms the AMS method for all three parameters. 

For  NMSE of Dopper frequency, our estimator is nearly the same as CACC-MUSIC in the LOS scenario.
Since the received signals involve a dominant LOS path, CACC-mirrored-MUSIC  has higher accuracy than our proposed estimator, however, 
CACC methods only work with the existence of a LOS scenario. Without the LOS path, it is seen that the NMSEs of CACC methods rise dramatically. Without the LOS path, our proposed estimator is better than all the benchmarks.

For NMSE of AoA, our method can achieve nearly the same NMSE as H-MUSIC at high SNRs. In low SNRs, the performance of our proposed AoA estimator degrades.
This is because the Taylor series is convergent when the dynamic component is little. Since the noise can be treated as the dynamic component, the Taylor series will be divergent with a high noise floor. Despite the defect, our method obtains the dynamic path separately while both H-MUSIC and CACC methods need to estimate the parameters of all NLOS paths.

For NMSE of delay, our method can achieve the best performances without the LOS path. With the existence of the LOS path,  our delay estimator still achieves the lowest NMSE at a high SNR.
Besides, CACC methods can only obtain relative delays, which means that the delay of the LOS path is necessary to be known at the BS.
Our delay estimator can obtain the absolute delays as long as $S_n[g]$ is known.
 
\begin{figure}[t]
    \centering
    \includegraphics[scale=0.6]{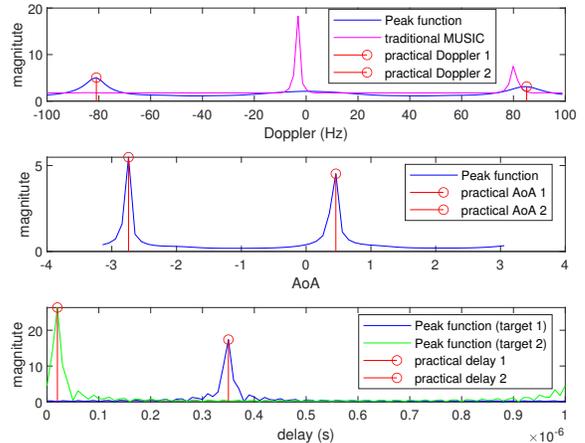}
    \caption{The shape of Peak functions of the multi-path estimator.}
    \label{peak}
\end{figure}

Fig. \ref{peak} illustrates the shapes of the `Peak' functions with considering multiple dynamic paths in a noiseless environment. We let $L=2$ and $L_S=5$. Other system parameters are the same as Fig. \ref{NMSEDop}. For the Doppler frequency,  we can observe two peaks tightly match with the practical Doppler frequencies. We compare its shape with the traditional MUSIC which has a large peak around 0 Hz due to the existence of static components. In the traditional MUSIC, the estimated Doppler frequencies have unknown CFOs mixed with practical values. Even worse, it is noted that some dynamic paths are missing in the peaks of traditional MUSIC.  
For the AoA estimator,  the peaks have a sharp shape and match with the practical AoAs tightly. Traditional MUSIC cannot separate dynamic AoAs from static ones and hence would generate $7$ peaks, which is not shown in this figure.
For the delay estimator, we see that each peak occupies the entire range of $[0, T]$ individually, which is because the delays are obtained from different rows of the LS outputs in \eqref{LSout}. This means that the delays do not have ambiguity problems when multiple delays are close to each other.

\section{Conclusion}
This paper has proposed a Doppler-AoA-delay estimation scheme under the uplink ISAC systems, where only the parameters of moving targets are estimated. The system does not require synchronization between transceivers thanks to the CSI ratio. Our novel estimation scheme is mainly based on the Taylor series of the CSI ratio, which shows a good convergence property and makes it possible to transform the non-linear CSI ratio into linear forms. The simulation results show that the performance of our scheme becomes better with a larger power of static components. The proposed AoA and delay estimators outperform the benchmark at high SNR. The Doppler estimator outperforms the benchmarks in the NLOS scenario.  Our work can be effectively applied in the PMN and WiFi sensing networks.

\begin{appendices}
\section{Convergence of Taylor Series of CSI Ratio}\label{App1}
From \eqref{1thder} and \eqref{2thder}, we can generalize the $k$th derivative of $f(\Z_m)$ as \eqref{guinak}, where $y_n$ is short for $y_n[m,g]$, $y_{n-q}$ is short for $ y_{n-q}[m,g]$, $ e^{j k \bar f_{\rm O}[m] } =e^{j k mT_{\rm A}  f_{\rm O}[m]}$, and $ e^{-j k  \bar\tau_{\rm O}[m] } =e^{-j k \frac g T  \tau_{\rm O}[m]}$. 
Both derivatives w.r.t. $k=1$ and $k=2$ satisfy  \eqref{guinak}. The higher-order Taylor series can be proved by using the induction method, while omitted due to the page limits.
 
\begin{table*}  
\begin{align}\label{guinak}
f^{(k)}(z_{l_1},\cdots,z_{l_k})= (-1)^k \frac{ k! e^{-j q  \sum\limits_{i=1}^k \p_{l_i} } y_n -  (k-1)! \sum\limits_{i=1}^k e^{-jq \sum\limits_{\mathop {h=1}\limits_{h\neq i} }^{k}\p_{l_h} } y_{n-q} }{y_{n-q} ^{ k+1 }}e^{j k \bar f_{\rm O}[m] }e^{-j k \bar \tau_{\rm O}[m] }, 
l_k\in\{ 1, \cdots, L\} 
\end{align} 
\end{table*}

Then, the $k$th individual coefficient of the Taylor series is given by
\begin{align}\label{termk}
  \frac{1}{k!}f^{(k)}(z_{l_1},\cdots,z_{l_k}) \prod\limits_{i=1}^k(z -z_{l_i}) .
\end{align}
We need to prove that the absolute value of \eqref{termk} decreases with $k$ increasing. Note that
\begin{align} \label{termup}
& \frac{1}{k!}f^{(k)}(z_{l_1},\cdots,z_{l_k}) \prod\limits_{i=1}^k(z -z_{l_i})\notag\\
\leq&    \frac{    |y_n | + |y_{n-q}|  }{|y_{n-q}|^{ k+1 }} \prod\limits_{i=1}^k|z -z_{l_i}| 
= \frac{    |y_n | + |y_{n-q}|  }{\left| y_{n-q} \right|^{ k+1 }} \prod\limits_{i=1}^k|z -z_{l_i}|\notag\\
 \leq &\frac{    |y_n | + |y_{n-q}|  } {|y_{n-q}|}  \frac  { \prod\limits_{i=1}^k|z-z_{l_i}| }{\left|y_{n-q} \right|^{ k }} 
  \leq  \frac{    |y_n | + |y_{n-q}|  } {|y_{n-q}|}  \frac{ \mathop{\max}\limits_l |2\alpha_l|^{k} } {\left| y_{n-q} \right|^{ k }}.
\end{align}
The last term in \eqref{termup} gives the upper bound of the  $k$th coefficient of the Taylor series w.r.t. different $l_{i}$. Acorrding to the definition of Taylor series, the $k$th coefficient of the Taylor  series sums $L^k$ individual coefficients  of $l_{i}, 1\leq i\leq k$. Hence, the overall upper bound equals $\frac{    |y_n | + |y_{n-q}|  } {|y_{n-q}|}  \frac{ \mathop{\max}\limits_l |2L\alpha_l|^{k} } {\left| y_{n-q} \right|^{ k }}$.  Note that the upper bound approaches to zero with $k \rightarrow \infty$ when $|y_{n-q}|\geq \mathop{\max}\limits_l 2L|\alpha_l| $.    Therefore,    the Taylor series of CSI ratio are convergent when $|y_{n-q}|\geq \mathop{\max}\limits_l 2L|\alpha_l| $.

\section{A Trival Solution in AoA Estimator}\label{AppB} 
When $\p=0$ and $N$ is small, it can be assumed that $S_{n_1}[g]=S_{n_2}[g]\triangleq S[g]$, $\forall n_1, \forall n_2$. 
Referring to \eqref{1thder}, and abbreviating  $D_n[m,g]$,  $S_n[g]$, and  $S_n[g]+D_n[m,g]$, as  $D_n$,  $S$, and ${Y_n}$ respectively, ${\bf d}_1(0,n')$ in \eqref{D1AoA} becomes 
\begin{align}\label{D1org}
{\bf d}_1(0,n') & = [h_{0,-n'}^{m,g} (0),  \cdots, h_{0+N-1,-n'+N-1}^{m,g} (0) ]^T \notag\\
               & =\left[\frac{Y_{n'}-Y_0}{{Y}_{n'}^2},  \cdots,\frac{Y_{n'}-Y_{ N-1 }}{{Y}_{n'}^2}\right]^T  \notag\\
               & = \left[\frac{D_{n'}-D_0}{{Y}_{n'}^2},  \cdots, \frac{D_{n'}-D_{N-1}}{{Y}_{n'}^2} \right]^T,
\end{align} 
and ${\bf d}_2(0,0,n')$ becomes 
\begin{align}\label{D2org}
&{\bf d}_2(0,0,n')\notag\\ 
& =[H_{0,-n'}^{m,g} (0,0),   \cdots, H_{0+N-1,-n'+N-1}^{m,g} (0,0) ]^T  \notag\\
                 & = 2\left[\frac{D_0-D_{n'}}{{Y}_{n'}^3}, \frac{D_{ 1}-D_{n'}}{{Y}_{n'}^3}, \cdots,\frac{ D_{ N-1 }-D_{n'}}{{Y}_{n'}^3}\right]^T,
\end{align}

Given that $L=1$ and abbreviating  $S_n[g]+D_n[m+p,g]$ as $Y_n (p)$,  the $(n'+1)$th column of  ${\bf A}[m,p,g]$ in \eqref{AAA} is 
\begin{align}\label{vecAA}
&{\rm v}_{n'}({\bf A}[m,p,g]) \notag\\
= &  [\psi_{0,-n'},\cdots, \psi_{N-1,-n'+N-1}]^T  \notag\\ 
= & \left[\frac{Y_0(p)}{Y_{n'}(p)}-\frac{Y_0 }{Y_{n'} },\cdots,\frac{Y_{N-1}(p)}{Y_{n' }(p)}-\frac{Y_{N-1} }{Y_{n'} }\right]^T\notag \\
 =& \left[\frac{S +e^{j2\pi pT_{\rm A}f\dop} D_0} {S +e^{j2\pi pT_{\rm A}f\dop}D_{ n'} }-\frac{S + D_{0}} {S + D_{ n'} },\cdots,\right.\notag\\
 &\left.\frac{S +e^{j2\pi pT_{\rm A}f\dop}D_{N-1}} {S +e^{j2\pi pT_{\rm A}f\dop}D_{n'} }-\frac{S + D_{N- 1}} {S + D_{n'} }\right]^T   \notag\\ 
 = &  \left[\frac{S(D_{n'}-D_{0})(1-e^{j2\pi pT_{\rm A}f\dop} ) } {Y_{n'}(S +e^{j2\pi pT_{\rm A}f\dop}D_{n'}) } ,\cdots,\right.\notag\\
 &\left.\frac{S(D_{n'}-D_{N-1})(1-e^{j2\pi pT_{\rm A}f\dop}) } {Y_{n'}(S +e^{j2\pi pT_{\rm A}f\dop}D_{n'}) } \right]^T,
\end{align}
 where ${\rm v}_{n'}(\cdot)$ is the $n'$th column of a matrix.
 
To prove that $\p=0$ is a trivial solution to \eqref{AoAP}, it is equivalent to proving that ${\bf d}_1(0,n')$ and ${\bf d}_2(0,0,n')$ are the basis vectors of $ {\bf A}[m,p,g]$.
It is clear that \eqref{vecAA} is linearly dependent with \eqref{D1org} and \eqref{D2org}. Hence, the rank of $[{\bf d}_1(0,n'),d_2(0,0,n'), {\bf A}[m,p,g]]$ is 1. Then, noting that each column of $ \bar{\bf A}$ is
\begin{align} 
 \left[ 
 {\rm v}_0({\bf A}[m,p,g]),  \cdots  ,  {\rm v}_{N-1}({\bf A}[m,p,g])    \right],
\end{align}
  ${\bf d}_1(0) = [{\bf d}_1^T(0 , 0), \cdots, {\bf d}_1^T(0,N-1)]^T$, and  ${\bf d}_2(0,0) = [{\bf d}_2^T(0 ,0, 0), \cdots, {\bf d}_2^T(0,0,N-1)]^T$, we have 
\begin{align}
{\rm rank}(\bar{\bf A}) ={\rm rank}([\bar{\bf A},{\bf d}_1(0),{\bf d}_2(0,0)])=N.
\end{align}  
Note that ${\bf d}_1(0)$ and ${\bf d}_2(0,0)$ are the basis vectors for any given $ \bar{\bf A}$. Likewise, the higher-order vectors of ${\bf d}_i, i\geq3,$ are also the basis vectors for any given $ \bar{\bf A}$.
Therefore, $\p=0$ is a trivial solution to \eqref{AoAP} when $S_{n_1}[g]=S_{n_2}[g]$ and $L=1$, which ends the proofs.
 \end{appendices}
 
% Generated by IEEEtran.bst, version: 1.14 (2015/08/26)

\end{document}